\DeclareMathOperator*{\argmin}{argmin}
\DeclareMathOperator*{\maximum}{\emph{max}}
\DeclareMathOperator*{\minimum}{\emph{min}}
\DeclareMathOperator*{\lexmax}{\emph{lexmax}}
\newtheorem{lemma}{Lemma}
\newtheorem{theorem}{Theorem}
\newtheorem{proof}{Proof}
\newtheorem{definition}{Definition}
\begin{document}
%
\title{FASS: A Fairness-Aware Approach for Concurrent Service Selection with Constraints}

\author{\IEEEauthorblockN{Songyuan Li\IEEEauthorrefmark{2}, Jiwei Huang\IEEEauthorrefmark{3}\IEEEauthorrefmark{1}, Bo Cheng\IEEEauthorrefmark{2}, Lizhen Cui\IEEEauthorrefmark{4}, Yuliang Shi\IEEEauthorrefmark{4}\IEEEauthorrefmark{6}}
\IEEEauthorblockA{\IEEEauthorrefmark{2}State Key Laboratory of Networking and Switching Technology\\
Beijing University of Posts and Telecommunications,
Beijing 100876, China}
\IEEEauthorblockA{\IEEEauthorrefmark{3}Department of Computer Science and Technology,
China University of Petroleum - Beijing,
Beijing 102249, China}
\IEEEauthorblockA{\IEEEauthorrefmark{4}School of Software, Shandong University, Jinan 250101, China}
\IEEEauthorblockA{\IEEEauthorrefmark{6}Dareway Software Co., Ltd., Jinan 250101, China}
Email: lisy@bupt.edu.cn, huangjw@cup.edu.cn, chengbo@bupt.edu.cn, clz@sdu.edu.cn, shiyuliang@sdu.edu.cn
}
%
\maketitle

\begin{abstract}
The increasing momentum of service-oriented architecture has led to the emergence of divergent delivered services, where service selection is meritedly required to obtain the target service fulfilling the requirements from both users and service providers. Despite many existing works have extensively handled the issue of service selection, it remains an open question in the case where requests from multiple users are performed simultaneously by a certain set of shared candidate services. Meanwhile, there exist some constraints enforced on the context of service selection, e.g. service placement location and contracts between users and service providers. In this paper, we focus on the QoS-aware service selection with constraints from a fairness aspect, with the objective of achieving max-min fairness across multiple service requests sharing candidate service sets. To be more specific, we study the problem of fairly selecting services from shared candidate sets while service providers are self-motivated to offer better services with higher QoS values. We formulate this problem as a lexicographical maximization problem, which is far from trivial to deal with practically due to its inherently multi-objective and discrete nature. A fairness-aware algorithm for concurrent service selection (FASS) is proposed, whose basic idea is to iteratively solve the single-objective subproblems by transforming them into linear programming problems. Experimental results based on real-world datasets also validate the effectiveness and practicality of our proposed approach.

\end{abstract}

\begin{IEEEkeywords}
service selection; Quality of Service (QoS); max-min fairness; selection constraints; concurrent service execution.
\end{IEEEkeywords}

%
\IEEEpeerreviewmaketitle
\vspace{-2ex}
\section{Introduction}
Nowadays, service selection has become a key building block of Service-Oriented Architecture (SOA) along with the prevalence of services computing technology. It implies the process of gaining target service from various candidate services, whose objective is to match both functional and non-functional requirements. With the increasing scale of web services, candidate services with equivalent functionality are simultaneously provided for selection, but vary in non-functional properties \cite{Qi2010Combining}. Non-functional properties evaluates how well a service will serve for the user, which is usually represented by Quality of Service (QoS).

The common goal of QoS-aware service selection is to elect the target service with the optimal end-to-end QoS, which is inherently an optimization problem \cite{Zeng2004QoS}. There have been a great number of existing works proposing efficient service selection schemes, especially for web and cloud systems. While most existing work in the literatures primarily deals with finding the single target service from candidate services for one user, however, little focus has been on the service selection scenario with multiple service requests addressed by users simultaneously. Multiple service requests submitted by users are required for concurrent service running at the service platform. In other words, the procedure of service selection for each service request should be synchronously conducted, and all the requests from all users should be considered at the same time in service selection problem.

For this case, service requests proposed by divergent users may have various constraints. For example, when mobile communication users request for establishing links with the base station (BS), there have been the selection rule (e.g. location-aware \cite{Chen2015Dynamic}) restricting the range of deliverable BS. In the fields of content distribution, content users attributable to multiple Internet Service Providers (ISPs) have hard constraints about the Content Distribution Netoworks (CDNs) that they can access to \cite{Adhikari2012Unreeling}. Besides, users and service providers reach an agreement in contract, specifying that users can merely use the paid services. Therefore, the constraints should be fully considered especially for the cases in concurrent service selection.

Furthermore, multiple service requests may share the limited amount of candidate services with the identical functional capacities but different QoS levels. In this situation, multiple service requests are inherently competing for the candidate services with each other for the purpose of obtaining a higher QoS. Therefore, it necessitates a fairness-aware selection mechanism when pooling a shared set of candidate services.

In this paper, we put forward a fairness-aware service selection scheme, addressing the problem of multiple QoS-aware service selection with constraints. Our service selection approach is carefully designed from the perspective of service ecosystem \cite{Castelli2015Engineering} with a top-down viewpoint. On the one hand, users are usually willing to gain a better service with a higher QoS at a reasonable price. Our proposed approach fully takes care of the QoS with respect to each individual service request, and encourages each of them to acquire the target service with high and acceptably fair QoS. On the other hand, each service request gains a better service with a higher QoS without degrading the QoS of other service requests, which ensures the fairness of concurrent service selection. It is helpful for holding all the existing users in the ecosystem and attracting more users from the outside with the fair policy. With the growing scale of users, service providers will gain more revenue motivating them to develop services with higher QoS. In this way, the loop of sustainable SOA development is built up.

Highlights of our original contributions are as follows. We firstly outline our basic model of multiple service selection with constraints and formulate the max-min fairness (MMF) optimization objective as a lexicographical maximization problem. In virtue of the multi-objective and discrete characteristics of the lexicographical maximization problem, it is often a multifaceted and untractable puzzle to work out the explicit exact solution. Through extensively investigating the structure of lexicographical problem, we find out two kind properties which are separable convex objective and totally unimodular linear constraints respectively. Thanks to these two properties, we transform the lexicographical maximization problem into a range of  equivalent linear programming (LP) subproblems. The target services for multiple service requests through finite iterations of LP, where max-min fairness is achieved. Finally, the efficiency and practically of our proposed approach is validated through experiments based on the real-world dataset.
\vspace{-1.25ex}
\section{Related Work}
\begin{table}
\centering
\caption{Summary of Notations}\label{tab:notations}
\begin{tabular}{|c||p{6.6cm}|}
	\hline
	Notation & \multicolumn{1}{c|}{Definition} \\\hline
	\hline
	$n, N, \mathcal{N}$ & index, number, set of multiple service requests (users) \\\hline
	$i, M, \mathcal{M}$ & index, number, set of third-party service providers \\\hline
	$C_i$ & set of candidate services offered by provider $i$ \\\hline
	$j$ & index of service from the candidate set $C_i$ \\\hline
	$S_n$ & set of service providers for service request $n$ authorized to select service from \\\hline
	$\mathcal{N}_i$ &  set of service requests which are authorized service access by the service provider $i$\\\hline
	$x^n_{i,j}$ & whether the service request $n$ elect the $j^{th}$ candidate service in the candidate set $C_i$ (=1) or not (=0) \\\hline
	\bm{$\Theta$} & solution space formed by all of variables $x_{i,j}^n$, $\forall n\in \mathcal{N},\, i\in S_n,\, j \in C_i$  with all possible values of \{0, 1\} \\\hline
	$Q_{i,j}$ & QoS value of the $j^{th}$ service in the candidate set $C_i$ \\\hline
	$Q^{(ref)}_n$ & reference QoS value of service request $n$ \\\hline
	$\tau_n$ & execution time of service request $n$ \\\hline
    $\pi^n_{i,j}$ & user $n$'s payment when selecting the $j^{th}$ candidate service from the candidate set $C_i$  \\\hline
    $\pi_n$ & user $n$'s overall payment \\\hline
	$a_n$ & minimum payment for launching the service request $n$ \\\hline
	$b_n$ & maximum extra bonus for obtaining a better service outperforming the QoS baseline $Q^{(ref)}_n$ \\\hline
	$\bm{\varpi}$ & payment vector formed by the terms $\pi^n_{i,j}$, $\forall n\in \mathcal{N}, i \in S_n, j \in C_i$ \\\hline
	$k, K, \bm{\varpi}_k$ & index, length, the $k^{th}$ term of vector $\bm{\varpi}$ \\\hline
	$\lambda_{i,j}^{n,0}, \lambda_{i,j}^{n,1}$ & real variables generated by $\lambda$-technique \cite{Meyer1976A} in LP transformation \\\hline
	\hline
\end{tabular}
\vspace{-0.5cm}
\end{table}
\vspace{-0.5ex}
\subsection{Service Selection based on Single Service Request}
\vspace{-0.5ex}
Given the increasing number of service providers and diversified types of services they delivered on varying quality offering and pricing strategies, service selection is a highly indispensable research issue which has been intensively explored. Since service selection is conducted to pick out and allocate the optimal service to the user, it is intuitive that service selection is modeled as a mixed integer linear programming problem (MILP) which is NP-hard.

The heuristic-driven approach is usually applied to reduce the computational cost of finding the optimal service, whose weakness resides in simply providing a sub-optimal selection closer to the optimum one \cite{Casalicchio2009Optimal}. Addressing the issue of service selection in mobile edge computing environment, Wu et al. \cite{Wu2018Service} put forward a selection scheme minimizing the response time, whose arithmetic design integrated genetic and simulated annealing algorithm. Powell et al. \cite{Powell2018Optimal} conducted research on service selection in the case study of cloud resource configurations, structuring the set of feasible configuration from which diverse Pareto-optimal configurations were selected.

Furthermore, the machine learning (ML) methodology is also taken into use, intelligentizing the procedure of service selection \cite{Kirchner2015Classification}. On the basis of the reputation-driven matrix factorization approach, Xu et al. \cite{Xu2016Web} conducted accurate QoS prediction on unknown web services, supporting the context of service selection. Saleem et al. \cite{Saleem2017Personalized} designed a ML-based service selection approach based on learning-to-rank algorithm, taking advantage of historical service-selection decisions/outcomes and eventually delivering personalized service selection of each user.
\vspace{-1ex}
\subsection{Concurrent Service Selection Across Multiple Service Requests.}
\vspace{-0.75ex}
Although significant attention has been received in service selection, few selection algorithms have been proposed when multiple service requests are in the service-oriented system for concurrent execution, no wonder from the point of fairness. Bi-criteria \cite{Bessai2013Bi} analyzed concurrent sequential workflows from two aspects which are makespan and execution cost, and brought forward a fairness-aware service selection policy, preventing the running service of workflow from prolonged idleness and starvation. Specifically, the less a running service is affected by others fighting for simultaneous execution with shared candidates, the higher priority it attains to be selected to be performed in advance. In this way, all of concurrent workflows are well-done to be executed without over penalized. Services with highest execution priority make up the Pareto Set, solved by heuristic algorithms.

Regarding the fairness metrics, max-min fairness has been thoroughly studied in the issue of resource allocation across the fields of distributed systems, data center network  and queueing systems \cite{Ghodsi2011Dominant}. Ghodsi et al. \cite{Ghodsi2013Choosy} generalized the max-min fairness to Dominant Resource Fairness on account of resource allocation with placement constraints, where several properties like sharing incentive and strategy-proofness got theoretically proved. Saha et al. \cite{Saha2018Exploring} investigated the practical factors in Mesos agents which prevented itself from the desired DRF allocation and proposed specified configuration suggestions, filling the gap between target DRF resource allocation and actual resource allocation in the multi-user Apache Mesos cluster.

To overcome these limitations mentioned early, we put forward a novel approach for concurrent service selection, where the criteria of fairness is taken into account. To our best knowledge, it is the first time to introduce the concept of max-min fairness into the domain of service-oriented computing.
\vspace{-1.75ex}
\section{Problem Formulation}
\begin{figure}
\centering
\includegraphics[width=3.33in, keepaspectratio]{./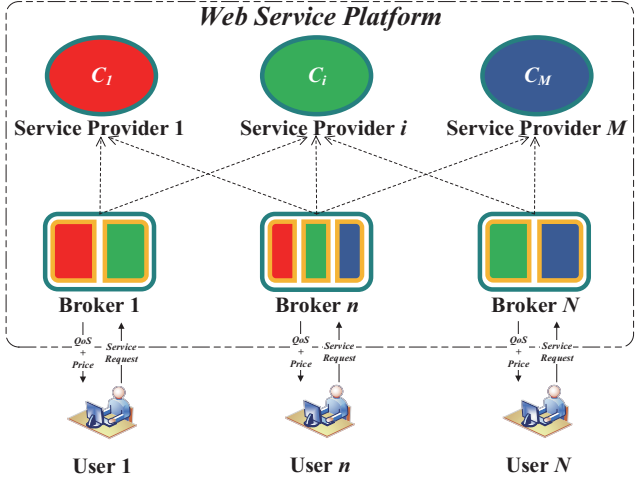}\\
\vspace{-1.75ex}
\caption{Overview of Service Selection Model.}
\label{fig:model}
\vspace{-0.6cm}
\end{figure}
\vspace{-0.75ex}
In this section, we firstly provide the model description towards multiple service selection with constraints, based on which a lexicographical problem achieving max-min fairness is delicately formulated.
\vspace{-1.75ex}
\subsection{Service Selection Model with Constraints for Multiple Service Requests}
We consider a set of multiple service requests $\mathcal{N} = \{1,2,...,N\}$ submitted by users to a web service platform for concurrent execution, as depicted in Fig. \ref{fig:model}. Substaintial candidate services are released by various third-party service providers. Given the QoS preference for each service request, the service broker is responsible for finding out the personalized target service from the numerous released services.

Without loss of generality, it is assumed that the candidate service sets from $M$ third-party service providers can be categorised into $\mathcal{M} = \{C_1,C_2,...C_M\}$. The candidate set contains a variety of services $j \in C_i$, where $1 \le i \le M$. As discussed above, the candidate set are sharable with constraints amongst multiple service request. The service selection constraint for service request $n$ is indicated by the constraint set $S_n$. The element $i \in S_n$ implies the enabled types of services which service request $n$ can elect. From the standpoint of service providers, the set of service requests authorized by provider $i$ is characterized with $\mathcal{N}_i$.

Response time, also called service execution time, is one of the most important QoS criteria in the service-oriented literature. For simplicity, response time is applied as the only QoS criteria in this paper. For each service $j$ in $C_i$, the response time is measured as the value of $Q_{i,j}$.

The selection of candidate service is formulated by a binary variable $x^n_{i,j}$, where 1 means the $j^{th}$ service in the candidate set $C_i$ is elected by the service request $n$ and 0 indicates the opposite. Decision variables of the service request $n$ is represented by $\bm{x}_n = \{x^n_{i,j} | i \in S_n, j \in C_i\}$, and all of variables $x^n_{i,j}$ forms the solution space \bm{$\Theta$}. Thus, the execution time $\tau_n$ for service request $n$ can be represented as the following equation.

\begin{equation} \label{eq:exeTm}
    \tau_n = \sum\limits_{i \in S_n}\sum\limits_{j \in C_i} x_{i,j}^n Q_{i,j}
\end{equation}

Given diverse QoS requirements from users, a tailored Service Level Agreement (SLA) is highly required for a flexible service selection scheme to propose a satisfying service assignment plan \cite{Yan2007Autonomous}. To be more specific, an SLA is defined by the QoS committed by the service provider and associated payment which the user is obliged to afford. In this work, we assume that the user pays for the service in the pattern of pay-per-use. The pricing model of pay-per-use has been widely accepted in the field of cloud service \cite{EC2}, and so is in the case for service-oriented computing \cite{Kofler2010User}. Customers wish to be served by a better service with a higher QoS even though they are reasonably asked for more money. In the pay-per-use model, the payment for service request $n$ mainly consists of two parts, one of which is the basic payment $a_n$ for launching the service which the another is the maximum extra bonus $b_n$ for delivering a better service . A baseline of QoS criteria (i.e. response time) $Q_n^{(ref)}$ is addressed here, reflecting the user $n$' basic QoS requirements. If user $n$ obtains a service outperforming
the QoS baseline $Q_n^{(ref)}$, then a basic payment $a_n$ and an extra bonus should be charged. Otherwise, the user $n$ will pay at most $a_n$ without any extra bonus. Therefore, the user $n$'s payment is calculated as (\ref{eq:pay_func}) when selecting the $j^{th}$ candidate service from the candidate set $C_i$.
\vspace{-1ex}
\begin{equation} \label{eq:pay_func}
	 \pi_{i,j}^n = a_n + b_n  \cdot (1 - \frac{Q_{i,j}}{Q_n^{(ref)}} \cdot x_{i,j}^n)
\end{equation}
where $a_n$ and $b_n$ incorporates the pricing policy after joint negotiation between user $n$ and service providers. Extra bonus is widely acceptable to be linearly increasing with the QoS improvements \cite{Irwin2004Balancing}. In general, basic payment $a_n$ is positively correlated to the severity of QoS requirements (i.e. QoS baseline $Q_n^{(ref)}$). The user $n$'s overall payment is formulated by (\ref{eq:overall_pay_n}).
\vspace{-1.5ex}
\begin{equation} \label{eq:overall_pay_n}
\pi_n =  a_n + b_n  \cdot (1 - \sum\limits_{i \in S_n}\sum\limits_{j \in C_i} \frac{Q_{i,j}}{Q_n^{(ref)}} \cdot x_{i,j}^n)
\end{equation}
\vspace{-1.5ex}
\subsection{Lexicographical Problem Formulation Achieving MMF}
\vspace{-0.5ex}
Since the candidate services released by service providers are shared by multiple service requests waiting for concurrent execution, our design purpose is to take each service request into consideration and motivate all of them to obtain the target service with high and acceptably fair QoS. To be more specific, our service selection scheme applies max-min fairness across multiple service requests. The definition of max-min fairness is given as Definition \ref{def:mmf}.
\vspace{0.5ex}
\begin{definition}[Max-Min Fairness]\label{def:mmf} A service selection scheme satisfies max-min fairmess (MMF), if it is impossible to increase the $i^{th}$ lowest payment across $N$ service requests even though removing the service requests whose payment is strictly higher than the $i^{th}$ lowest payment, note that $i \in \mathcal{N}$.
\end{definition}
\vspace{0.5ex}

By the definition of MMF, we seeks to
maximize the lowest payment amongst the multiple requests,
then to optimize the second lowest without impacting the
previous one, and so forth. Until all of the service requests
have been optimized, the procedure of service selection will
be terminated with an MMF service selection scheme obtained.

In the area of multi-criteria optimization, lexicographical techniques \cite{Zykina2004} grants the highest optimization priority to the most important objective, matching the interests of max-min fairness. As a result, our service selection scheme based on max-min fairness can be rigorously formulated as a lexicographical maximization problem, theoretically defined as the objective function ({\ref{eq:lex_obj}}) subject to the constraint equations (\ref{eq:customer_constraint}) - (\ref{eq:binary_def}). In the scenario of our work, there exist two main types of constraints which are user constraints and provider constraints. The user constraints (\ref{eq:customer_constraint}) ensure that each customer's request should elect just only one service from available candidates of her own. The provider constraints (\ref{eq:provider_constraint}) imply that different user has to select different services from service providers.
\vspace{-1.5ex}
\begin{equation} \label{eq:lex_obj}
    \lexmax\limits_{x_{i,j}^n \in \,\bm{\Theta}}\;\; \bm{\pi} = (\pi_1, \pi_2, ... ..., \pi_N)
    \vspace{-1ex}
\end{equation}
\vspace{-1.5ex}
subject to,
\begin{equation} \label{eq:customer_constraint}
    \sum\limits_{i\in S_n}\sum\limits_{j\in C_i}x_{i,j}^n = 1,\;\forall n \in \mathcal{N}
\end{equation}
\begin{equation} \label{eq:provider_constraint}
    \sum\limits_{n\in \mathcal{N}_i}x_{i,j}^n \le 1,\;\forall i \in \mathcal{M},\,\forall j \in C_i
\end{equation}
\begin{equation} \label{eq:binary_def}
    x_{i,j}^n\, \in\, \{0,1\},\;\forall n\in \mathcal{N},\, i\in S_n,\, j \in C_i
\end{equation}

The objective in the lexicographical maximization problem is a payment vector $\bm{\pi} \in \mathbb{R}^N$, each element of which represents the payment of a specified user submitting the service request $n$. Optimal $\bm{\pi}^*$ is lexicographically no smaller than any feasible $\bm{\pi}$. It signifies that the first smallest element of $\bm{\pi}^*$ (i.e., the lowest payment across multiple requests) should be the maximum amongst all feasible solutions $\bm{\pi}$. In the case of all $\bm{\pi}$ with the same lowest payment, the second lowest payment in $\bm{\pi}^*$ is applied for maximization. The rest is in a similar fashion. Through solving this lexicographical problem iteratively, an optimal service selection plan is worked out, jointly maximizing the payment of each service request and achieving the max-min fairness.
\vspace{-1.2ex}
\section{Computing Service Selection Plan}
\vspace{-0.5ex}
In this section, we explore how to compute the max-min-fairness service selection plan. Introducing the $\lambda-$technique \cite{Meyer1976A} and linear relaxation, an equivalent LP transformation is conducted, which significantly helps for the improvement of algorithm efficiency.
\vspace{-1.6ex}
\subsection{Iterative MMF Optimization Framework}
\vspace{-0.5ex}
An iterative MMF optimization framework  namely \emph{FASS} is put forward in the first step. Both payment parameters and QoS baselines are tracked for each service request. The service assignments for all $N$ service requests are iteratively accomplished one after another according to the non-decreasingly order of service payments. In the first round of iterations, the service request $n^*$ with the lowest payment is prioritized for service selection and payment optimization, treated as a subproblem implemented by a Linear Programming (LP) problem in the Section \ref{subsec:lp_trans}.

Once the candidate service optimizing the service request $n^*$'s payment is picked out, there are several settings ready for the next iteration round. In brief, we freeze the service assignment of optimized request $n^*$. First, the family of decision variables $\{x_{i,j}^n\,|\,n=n^*\}$ holds as unchanged, and lowers the dimension of the solution space \bm{$\Theta$} by one. Second, the solution space \bm{$\Theta$} should be also reduced by the decision variables relevant to the selected candidate services, formulated by $\{\,x_{i,j}^n\,|\,x_{i,j}^{n^*} = 1, i \in S_{n^*}, j \in C_i\}$. After the service request with the lowest payment having been optimized, the next round is launched aimed to optimizing the service request with the second lowest payment. Preparing for the afterwards round, we conduct the settings of solution space \bm{$\Theta$} analogous to what is done at the first round.

Such iterative process repeats until all of the service requests obtain the target service of her own. The iteration algorithm terminates, indicating the arise of concurrent service selection scheme with max-min fairness. It should be noticed that the service selection scheme is obtained through deterministic finite iterative rounds. Algorithm \ref{alg:mmf} illustrates pseudo-code for concurrent service selection achieving max-min fairness (FASS).

So far, the MMF optimization framework for concurrent service selection has been comprehensively provided. By means of equivalent LP transformation, Section \ref{subsec:lp_trans} will intensively tackle the challenges incurred by the discrete and multi-objective nature of original lexicographical problem.

\begin{algorithm}[t]
	\caption{FASS: Service Selection across Multiple Requests with Max-Min Fairness.}
	\label{alg:mmf}
	 {\bf Input:}
	Basic Payment $\mathcal{A}$, Extra Bonus $\mathcal{B}$, and QoS Baseline $\mathcal{Q}^{(ref)}$; \\
	{\bf Output:}
	Service Assignment $x_{i,j}^n, \forall n \in \mathcal{N}, i \in S_n, j \in C_i;$
	\begin{algorithmic}[1] \label{alg:mmf}
		\STATE Initialize $\widetilde{\mathcal{N}} \gets \mathcal{N};$
		\vspace*{0.065in}\WHILE {$\widetilde{\mathcal{N}} \neq \emptyset$}
		\vspace*{0.065in}\STATE $\textbf{\emph{x}} \gets LP(\mathcal{A},\mathcal{B}, \mathcal{Q}^{(ref)},\mathcal{Q},\Theta)$; \hspace{0.1in} \small{$\rhd$ Solve the LP problem (\ref{eq:obj-lp}) to obtain the scheduling plan \textbf{\emph{x}}}
		\vspace*{0.065in}\STATE \normalsize{\textbf{\emph{x$_{n^*}$}} $\gets \argmin\limits_{n \,\in \; \mathcal{N}}\,\pi_n$ \,;}\hspace{0.1in} \small{$\rhd$ Obtain workflow $n^*$'\,s optimal scheduling plan}
		\vspace*{0.065in}\STATE \normalsize{Fix the variable subset \textbf{\emph{x$_{n^*}$}}\,;}
		\vspace*{0.065in}\STATE \normalsize{Set $x_{i,j}^n \gets 0,$ in the case of arbitrary $ n \neq n^*$\,;}
		\vspace*{0.065in}\STATE \normalsize{$\Theta \gets\, \Theta \backslash \{x_{i,j}^n\,|\,n=n^*\}$} \hspace{0.035in} \small{$\rhd$ Lower the dimension of solution space $\Theta$ by one\,}
		\vspace*{0.065in}\STATE \normalsize{$\Theta \gets\, \Theta \cap\,\{\,x_{i,j}^n\,|\,x_{i,j}^{n^*} = 1, i \in S_{n^*}, j \in C_i\}\,$;} \hspace{0.035in} \small{$\rhd$ Reduce the solution space $\Theta$\,}
		\vspace*{0.065in}\STATE \normalsize{$\widetilde{\mathcal{N}} \gets \widetilde{\mathcal{N}}\backslash\{n^*\}$;} \hspace{0.1in} \small{$\rhd$ Hold all but $n^*$ and prepare for the next round}
		\vspace*{0.065in}\ENDWHILE
		\vspace*{0.065in}\RETURN $x_{i,j}^n, \forall n\in \mathcal{N},\, i\in S_n,\, j \in C_i;$
	\end{algorithmic}
\end{algorithm}
\vspace{-1ex}
\subsection{LP Transformation Towards the Lowest Payment Maximization}\label{subsec:lp_trans}
\vspace{-0.5ex}
The lexicographical optimization problem (\ref{eq:lex_obj}) is an integer optimization with multi-objectives, which is NP-hard to solve the problem directly. Given that, we resolve the problem (\ref{eq:lex_obj}) into $N$ single-objective subproblems optimizing the lowest payment. The optimization goal of single-objective subproblem is formulated as the equation (\ref{eq:single-obj}).\vspace{-1ex}
\begin{equation}\label{eq:single-obj}
\maximum\limits_{x_{i,j}^n \in \,\bm{\Theta}} \ \minimum\limits_{n \in \mathcal{N}}\,( a_n + b_n \times (1 - \frac{\tau_n}{Q_n^{(ref)}}) )
\end{equation}

Thanks to the possible value for the decision variable $x_{i,j}^n$ confined to $\{0,1\}$, the execution time $\tau_n$ for service request $n$, previously formulated by the equation (\ref{eq:exeTm}), can be also expressed as the equation (\ref{eq:exeTm_maxForm}).\vspace{-1.25ex}
\begin{equation}\label{eq:exeTm_maxForm}
\tau_n = \maximum\limits_{i \in S_n,j \in C_i} x_{i,j}^n Q_{i,j}
\end{equation}
Then, let the equation (\ref{eq:exeTm_maxForm}) substituted into the objective function (\ref{eq:single-obj}), then we have the single-objective problem represented in another non-linear form (\ref{eq:single-obj-substituted}), subject to the constraints (\ref{eq:customer_constraint})-(\ref{eq:binary_def}).
\vspace{-2ex}
\begin{equation}\label{eq:single-obj-substituted}
\maximum\limits_{x_{i,j}^n \in \,\bm{\Theta}} \ \minimum\limits_{n \in \mathcal{N},i \in S_n,j \in C_i} a_n + b_n \cdot (1 - \frac{ Q_{i,j}}{Q_n^{(ref)}} \cdot x_{i,j}^n)
\end{equation}

\textbf{Integral Optimum Guarantee.} A linear programming problem will yield an optimal solution in integers, if it has a totally unimodular (TU) coefficient matrix \cite{Korte2006Combinatorial}. In our problem domain, the coefficient matrix of constraints (\ref{eq:customer_constraint}) and (\ref{eq:provider_constraint}) is carefully investigated and verified the property of total unimodularity, in order to further determine whether the deletion of the integrality constriants (\ref{eq:binary_def}) impacts on the optimal service selection of the problem (\ref{eq:lex_obj}).

\begin{lemma}\label{lem:tu}
The matrix formed by the coefficients of constraints (\ref{eq:customer_constraint}) and (\ref{eq:provider_constraint}) is total unimodular.
\end{lemma}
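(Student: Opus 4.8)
The plan is to recognise the coefficient matrix as the node--edge incidence matrix of a bipartite graph and then invoke a classical sufficient condition for total unimodularity (see, e.g., \cite{Korte2006Combinatorial}). First I would fix an ordering of rows and columns. Index the columns of the matrix $A$ by the triples $(n,i,j)$ ranging over all decision variables $x_{i,j}^n$, i.e.\ $n\in\mathcal{N}$, $i\in S_n$, $j\in C_i$. Index the rows of $A$ by two disjoint blocks: the block $R_1$ with one row per service request $n\in\mathcal{N}$, arising from the user constraints~(\ref{eq:customer_constraint}), and the block $R_2$ with one row per pair $(i,j)$, $i\in\mathcal{M}$, $j\in C_i$, arising from the provider constraints~(\ref{eq:provider_constraint}).

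Next I would record the one structural fact that drives the whole argument: every column of $A$ has exactly two nonzero entries, both equal to $1$. Indeed, the variable $x_{i,j}^n$ occurs in~(\ref{eq:customer_constraint}) only in the $R_1$-row labelled $n$, and in~(\ref{eq:provider_constraint}) only in the $R_2$-row labelled $(i,j)$ (here one uses that $n\in\mathcal{N}_i$ is equivalent to $i\in S_n$, so that row is indeed present), and all other coefficients vanish. Thus each column contributes exactly one $1$ to $R_1$ and exactly one $1$ to $R_2$, which is precisely the incidence pattern of a bipartite graph with colour classes $R_1$ and $R_2$, the column $(n,i,j)$ being the ``edge'' joining request-node $n$ to service-node $(i,j)$.

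Then I would apply the standard criterion: a $0/\pm 1$ matrix in which every column has at most two nonzero entries, and whose rows can be partitioned into two classes so that every column with two entries of equal sign has those entries in different classes, is totally unimodular. With the partition $(R_1,R_2)$ above, each two-entry column has its two $+1$'s split across $R_1$ and $R_2$, so the hypothesis holds and $A$ is TU. If a self-contained proof is preferred, I would instead run the usual induction on the order of a square submatrix $B$: if $B$ has a zero column then $\det B = 0$; if some column of $B$ contains a single $1$, Laplace expansion along it reduces the claim to a smaller submatrix of the same shape with determinant in $\{0,\pm 1\}$; and if every column of $B$ has two $1$'s, then, those $1$'s being split between $R_1$ and $R_2$, the sum of the $R_1$-rows of $B$ equals the sum of its $R_2$-rows, the rows are linearly dependent, and $\det B = 0$.

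The only place the argument could slip is the claim that every column has \emph{exactly} two nonzeros --- in particular that the relevant $R_2$-row always exists and that no variable is counted twice within one constraint; but this follows immediately from the consistency of the index sets ($i\in S_n\iff n\in\mathcal{N}_i$) and from each constraint being a sum over a disjoint index range, so it is not a real obstacle. Everything else is the textbook bipartite-incidence argument, and it is insensitive to the fact that~(\ref{eq:customer_constraint}) is an equality while~(\ref{eq:provider_constraint}) is an inequality, since total unimodularity is a property of the coefficient matrix alone and is preserved when the slack columns of~(\ref{eq:provider_constraint}) are appended.
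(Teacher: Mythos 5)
Your proposal is correct and follows essentially the same route as the paper: both partition the rows into the user-constraint block $R_1$ and the provider-constraint block $R_2$, observe that each column has exactly one $1$ in each block, and invoke the standard row-partition (bipartite incidence) sufficient condition for total unimodularity. Your additional remarks --- the explicit bipartite-graph reading, the determinant induction, and the note that the equality/inequality distinction is irrelevant --- are sound elaborations of the same argument rather than a different approach.
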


\begin{proof}
Suppose that the matrix $\textbf{A}_{s\times t}$ represents the coefficient matrix form by constraints (\ref{eq:customer_constraint}) and (\ref{eq:provider_constraint}). The number of rows $s = N+\sum_{i\in\mathcal{M}} \vert C_i \vert$ indicates the total amount of both customer and provider constraints while the number of columns $t = \sum_{n \in \mathcal{N}} \sum_{i \in S_n} \vert C_i \vert$ characterizes the dimension of decision variable $\textbf{\emph{x}}$.

The matrix $\textbf{A}_{s\times t}$ is sufficiently judged as a totally unimodular matrix if it reaches two conditions below. In view of $x_{i,j}^n\, \in\, \{0,1\}$, it is easily seen that the coefficient matrix $\textbf{A}_{s\times t}$ satisfies the first condition which claims that all entries is 0 or $\pm1$. In terms of the second condition, we elect the entries of rows which belongs to the row subset $\{1,2,...,N\}$ to compose the set $R_1$, and the remaining entries of rows form up the set $R_2$, satisfying $R_1 \cap R_2 = \emptyset$. Given the constraints (\ref{eq:customer_constraint}), it is obvious to point out that, the summation of entries grouped by columns in rows $R_1$ is a $1 \times t$ vector with all the entries equal to 1. Regarding the constraints (\ref{eq:provider_constraint}), similarly, the summation of entries grouped by columns in rows $R_2$ is also a $1 \times t$ vector whose entries are 1. Therefore, it can be concluded that $\sum_{i_1 \in R_1} a_{i_1 j} \le 1$ and $\sum_{i_2 \in R_2} a_{i_2 j} \le 1$ for $\forall j \in \{1,2,...,t\}$, further satisfying the second condition  $\vert \sum_{i_1 \in R_1} a_{i_1 j} - \sum_{i_2 \in R_2} a_{i_2 j} \vert \le 1,$ for $\forall j \in \{1,2,...,t\}$.
\end{proof}
\vspace{0.3ex}
From Lemma \ref{lem:tu}, it follows that our problem has an integral optimum as long as any optimum exists, providing the legality basis of linear relaxation on the integer constraints (\ref{eq:binary_def}). The integer constraints' relaxation is formulated as follows.\vspace{-2ex}
\begin{equation}\label{eq:rational_def}
    x_{i,j}^n\, \in\, \mathbb{R}^+,\;\forall n\in \mathcal{N},\, i\in S_n,\, j \in C_i
\end{equation}
\textbf{Equivalent Convex Objective.} The optimal service selection scheme of the problem (\ref{eq:single-obj-substituted}) can be attained by solving the following lexicographical problem as (\ref{eq:single-obj-lex}). The common goal of this problem is to maximize the lowest payment across multiple service request, which is specifically the minimum element in $\bm{\varpi}$. Thus, it shows that the optimal decision variable $\bm{x}^*$ derived from the problem (\ref{eq:single-obj-lex}) is equivalent to the optimal solution of the problem (\ref{eq:single-obj-substituted}).
\vspace{-1.5ex}
\begin{equation}\label{eq:single-obj-lex}
\lexmax\limits_{x_{i,j}^n \in \,\bm{\Theta}} \quad \bm{\varpi} = (\pi_{i,j}^n\,|\,n \in \mathcal{N}, i \in S_n, j \in C_i)
\end{equation}
\vspace{-0.1ex}
In order to eventually supply a linear objective function, a tailored separable convex objective function $\xi(\bm{\varpi})$ is elaborately defined as follows, served as an intermediate transformation of objective function. The $k^{th}$ element of \,$\bm{\varpi}$ is labeled by $\varpi_k$.
\vspace{-1.5ex}
\begin{equation}\label{eq:separable-obj}
\xi(\bm{\varpi}) = \sum\limits_{k = 1}^{\vert \bm{\varpi} \vert} {\vert \bm{\varpi} \vert}^{- \varpi_k}= \sum\limits_{k = 1}^K K^{- \varpi_k}
\end{equation}
\vspace{-2.5ex}
\begin{lemma}\label{lem:separable-obj}
$\xi(\cdot)$ reverses the original partial order of lexicographically no greater than $(\succeq)$, which is mathematically represented as $\bm{\varpi}(\bm{x}^*) \succeq \bm{\varpi}(\bm{x})$ $\Leftrightarrow \xi(\bm{\varpi}(\bm{x}^*)) \le \bm{\varpi}(\bm{x})$.
\end{lemma}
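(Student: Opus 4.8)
The plan is to show that $\xi$ reverses the leximin ordering of payment profiles, where $\bm{\varpi}(\bm{x}^{*}) \succeq \bm{\varpi}(\bm{x})$ means that the non-decreasing rearrangement of $\bm{\varpi}(\bm{x}^{*})$ is lexicographically at least that of $\bm{\varpi}(\bm{x})$, i.e.\ that $\bm{x}^{*}$ induces the fairer profile. I would begin from the observation that $\xi$ is a symmetric function of the entries of $\bm{\varpi}$: permuting those entries leaves $\sum_{k=1}^{K} K^{-\varpi_{k}}$ unchanged, so $\xi$ depends only on the multiset of payments and one may assume throughout that $\bm{\varpi}(\bm{x}^{*})$ and $\bm{\varpi}(\bm{x})$ are already written in non-decreasing order, which is precisely the form in which $\succeq$ compares them. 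Since the leximin relation on such sorted vectors is a total preorder --- trichotomy: either the two multisets coincide, or one is strictly leximin-above the other --- it suffices to prove two things: (a) if the multisets coincide then $\xi(\bm{\varpi}(\bm{x}^{*})) = \xi(\bm{\varpi}(\bm{x}))$, which is immediate from symmetry; and (b) if $\bm{\varpi}(\bm{x}^{*}) \succ \bm{\varpi}(\bm{x})$ strictly then $\xi(\bm{\varpi}(\bm{x}^{*})) < \xi(\bm{\varpi}(\bm{x}))$. The claimed equivalence with ``$\le$'' then drops out of (a), (b) and trichotomy.

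The heart of the argument is step (b). Writing $\bm{u} = \bm{\varpi}(\bm{x}^{*})$ and $\bm{v} = \bm{\varpi}(\bm{x})$ in non-decreasing order, let $m$ be the first index at which they differ, so $u_{k} = v_{k}$ for $k < m$ and $u_{m} > v_{m}$. Then
\[
\xi(\bm{v}) - \xi(\bm{u}) \;=\; \bigl(K^{-v_{m}} - K^{-u_{m}}\bigr) \;+\; \sum_{k = m+1}^{K} \bigl(K^{-v_{k}} - K^{-u_{k}}\bigr),
\]
whose first bracket is strictly positive. For the tail I would use that $\bm{u}$ is sorted, so $u_{k} \ge u_{m}$ and hence $K^{-u_{k}} \le K^{-u_{m}}$ for every $k \ge m$, while each $K^{-v_{k}} \ge 0$; this bounds the tail below by $-(K-m) K^{-u_{m}} \ge -(K-1) K^{-u_{m}}$. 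Therefore $\xi(\bm{v}) > \xi(\bm{u})$ reduces to the elementary estimate $K^{-v_{m}} - K^{-u_{m}} \ge (K-1) K^{-u_{m}}$, i.e.\ $K^{-v_{m}} \ge K^{-(u_{m}-1)}$, i.e.\ the exponent gap $u_{m} - v_{m} \ge 1$; strictness then comes for free, since either the tail is empty or it contains a strictly positive term, and $K = |\bm{\varpi}| \ge 2$.

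The step I expect to be the main obstacle is exactly this gap requirement $u_{m} - v_{m} \ge 1$: it is guaranteed only when the payments lie on a grid of mesh at least one, which the raw values $\pi_{i,j}^{n}$ need not. The way I would close it is to exploit finiteness of the solution space $\bm{\Theta}$: only finitely many payment values $\pi_{i,j}^{n}$ are attainable, so there is a smallest positive difference $\delta > 0$ between distinct attainable payments, and replacing each exponent $\varpi_{k}$ by $\varpi_{k} / \delta$ --- equivalently, order-isomorphically relabelling the attainable payments onto an integer scale, which alters neither the leximin order nor the monotonicity of $\xi$ --- restores the grid hypothesis. Combining symmetry, the domination inequality of (b), this finiteness/rescaling remark, and trichotomy of the leximin preorder then yields the stated equivalence; as a corollary, $\argmin_{\bm{x} \in \bm{\Theta}} \xi(\bm{\varpi}(\bm{x}))$ is a leximin-maximal --- hence max-min-fair --- selection, which is exactly the fact the LP transformation in Section~\ref{subsec:lp_trans} will invoke.
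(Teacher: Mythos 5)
Your core argument is the same one the paper uses (following \cite{Chen2017Scheduling}): sort both profiles, locate the first index $m$ at which they differ, note that the head terms cancel, bound the tail of the smaller profile below by $-(K-1)K^{-u_m}$ using $K^{-v_k}>0$ and $u_k\ge u_m$, and conclude from the geometric decay of $K^{-(\cdot)}$ that the single improved coordinate dominates everything after it. Your handling of strictness and of the symmetric/equality case is fine, and your reading of $\succeq$ as the leximin order on sorted rearrangements matches what the paper's $\langle\cdot\rangle$ notation is implicitly doing.

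Where you genuinely add something is in flagging the hidden hypothesis $u_m-v_m\ge 1$. The paper's proof simply writes ``let $\langle\bm{g}\rangle_{\tilde k}=k$, then it is assumed that $\langle\bm{\rho}\rangle_{\tilde k}\ge k+1$,'' i.e.\ it silently treats the payments as integer-valued (or unit-separated), which the real-valued $\pi^n_{i,j}$ of equation (\ref{eq:pay_func}) are not. You are right that this is the weak point. Be aware, however, that your repair does not quite prove the lemma as stated: replacing each exponent $\varpi_k$ by $\varpi_k/\delta$ turns $\sum_k K^{-\varpi_k}$ into $\sum_k (K^{1/\delta})^{-\varpi_k}$, i.e.\ it proves the order-reversal for a \emph{different} separable convex function, one with base $K^{1/\delta}$ instead of base $K=|\bm{\varpi}|$. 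That variant is perfectly adequate for the algorithm --- the LP transformation only needs \emph{some} separable convex surrogate whose minimizer is leximin-maximal --- but to use it you must propagate the enlarged base into the coefficients $K_0$ and $K_1$ of problem (\ref{eq:obj-lp}); with the base left at $K$, the inequality $K^{-v_m}\ge K\cdot K^{-u_m}$ genuinely fails when $0<u_m-v_m<1$. So either state the lemma for base $K^{1/\delta}$ (with $\delta$ the minimum positive gap between attainable payments, which exists by finiteness of $\bm{\Theta}$), or add the unit-separation assumption explicitly. Separately, note that the statement itself contains a typo --- the right-hand side should read $\xi(\bm{\varpi}(\bm{x}^*))\le\xi(\bm{\varpi}(\bm{x}))$ --- which your proposal correctly interprets.
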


\begin{proof} Due to analogous proof presented in \cite{Chen2017Scheduling}, we simplify the proof here. Suppose that $\bm{g}, \bm{\rho}$ meet the requirement of $\bm{g} \prec \bm{\rho}$. If the integer $\tilde{k}$ indicates the first non-zero element of $\langle\bm{\rho}\rangle - \langle\bm{g}\rangle$, then we have $\langle\bm{g}\rangle_k = \langle\bm{\rho}\rangle_k$ $  (\forall k \in \{1,...,\tilde{k}-1\})$ and  $\langle\bm{g}\rangle_{\tilde{k}} = \langle\bm{\rho}\rangle_{\tilde{k}}$. Let $\langle\bm{g}\rangle_{\tilde{k}} = k$, then it is assumed that $\langle\bm{\rho}\rangle_{\tilde{k}} \ge k+1.$
\vspace{-1.5ex}
\begin{equation}
\begin{split}
\xi(\bm{g}) &= \sum\limits_{k=1}^{\tilde{k}-1} K^{-\langle\bm{g}\rangle_k} + K^{-{\langle\bm{g}\rangle_{\tilde{k}}}}+ \sum\limits_{k=\tilde{k}+1}^K K^{-{\langle\bm{g}\rangle_k}}\\
&> \sum\limits_{k=1}^{\tilde{k}-1} K^{-\langle\bm{g}\rangle_k} + K^{-{\langle\bm{g}\rangle_{\tilde{k}}}} + (K-\tilde{k}) \cdot 0\\
&= \sum\limits_{k=1}^{\tilde{k}-1} K^{-\langle\bm{g}\rangle_k} + K^{-k}
\end{split}
\end{equation}
\vspace{-0.25ex}
\begin{equation}
\begin{split}
\xi(\bm{\rho}) &= \sum\limits_{k=1}^{\tilde{k}-1} K^{-\langle\bm{\rho}\rangle_k} + K^{-{\langle\bm{\rho}\rangle_{\tilde{k}}}}+ \sum\limits_{k=\tilde{k}+1}^K K^{-{\langle\bm{\rho}\rangle_k}}\\
&\leq \sum\limits_{k=1}^{\tilde{k}-1} K^{-\langle\bm{\rho}\rangle_k} + (K-\tilde{k}+1) \cdot K^{-{\langle\bm{\rho}\rangle_{\tilde{k}}}}\\
&\leq \sum\limits_{k=1}^{\tilde{k}-1} K^{-\langle\bm{\rho}\rangle_k} + K^{-k}
\end{split}
\vspace{-2ex}
\end{equation}

Given $\sum_{k=1}^{\tilde{k}-1} K^{-\langle\bm{g}\rangle_k} = \sum_{k=1}^{\tilde{k}-1} K^{-\langle\bm{\rho}\rangle_k},$ then $\xi(\bm{g}) > \xi(\bm{\rho})$ is proved as ture. It is obvious to obtain $\bm{g} = \bm{\rho} \Rightarrow \xi(\bm{g}) = \xi(\bm{\rho})$. Eventually, $\bm{g} \preceq \bm{\rho} \Rightarrow \xi(\bm{g}) \geq \xi(\bm{\rho})$ holds. Furthermore, the proof of $\xi(\bm{g}) \leq \xi(\bm{\rho}) \Rightarrow \bm{g} \succeq \bm{\rho}$ can be conducted by contradiction \cite{Chen2017Scheduling}.
\end{proof}

From Lemma \ref{lem:separable-obj}, it follows that\vspace{-0.5ex}
\begin{equation}
\lexmax\limits_{x_{i,j}^n \in\bm{\Theta}}\bm{\varpi}\Longleftrightarrow\minimum\limits_{x_{i,j}^n \in\bm{\Theta}} \xi(\bm{\varpi}) = \sum\limits_{n \in \mathcal{N}}\sum\limits_{i \in S_n}\sum\limits_{j \in C_i} K^{- \pi_{i,j}^n}
\vspace{-1ex}
\end{equation}\vspace{-0.25ex}
where $\xi(\bm{\varpi})$ is a summation of the term $K^{\pi_{i,j}^n}$ which is a convex function in terms of the single variable $x_{i,j}^n$. Therefore, solving the problem (\ref{eq:single-obj-substituted}) is equivalent to solving the following problem (\ref{eq:separable-obj}) with constraints (\ref{eq:customer_constraint}), (\ref{eq:provider_constraint}) and (\ref{eq:rational_def}).
\begin{equation}\label{eq:separable-obj-final}
\minimum\limits_{x_{i,j}^n \in \,\bm{\Theta}}\;\sum\limits_{n \in \mathcal{N}}\sum\limits_{i \in S_n}\sum\limits_{j \in C_i} K^{- [a_n + b_n \times (1 - \frac{Q_{i,j}}{Q^{(ref)}_n}\cdot x_{i,j}^n)]}
\vspace{-1ex}
\end{equation}
\textbf{LP Transformation.} Based on the properties of  separable convex objective and totally unimodular linear constraints holding as true, we introduce the $\lambda$-technique \cite{Meyer1976A} for optimality-equivalent Linear Programming (LP) transformation from the problem (\ref{eq:obj-lp}) in order to obtain the target service selection scheme with high efficiency. In our problem, each convex function $K^{- \pi_{i,j}^n}$ is transformed with $\lambda$-technique into another form $\psi_{i,j}^n(x_{i,j}^n)$, formulated as follows.
\vspace{-1.5ex}
\begin{equation}
\begin{split}
\psi_{i,j}^n(x_{i,j}^n) &= \sum\limits_{p \in \{0,1\}} K^{- [a_n + b_n \times (1 - \frac{Q_{i,j}}{Q^{(ref)}_n} \cdot p)]}\,\lambda_{i,j}^{n,p} \\
&= K^{- (a_n + b_n)} \,\lambda_{i,j}^{n,0} + K^{- [a_n + b_n \times (1 - \frac{Q_{i,j}}{Q^{(ref)}_n})]}\, \lambda_{i,j}^{n,1}
\end{split}
\end{equation}

The domain of decision variable $x_{i,j}^n$ is migrated from a discrete space $\{0,1\}$ to a continuous positive real space by the means of traversing each possible value $x_{i,j}^n \in \{0,1\}$, and newly introducing a couple of weighted variables $\lambda_{i,j}^{n,0}, \lambda_{i,j}^{n,1} \in \mathbb{R}^+$ subject to the constraints (\ref{eq:lamda_cons_1}) and (\ref{eq:lamda_cons_2}).
\vspace{-1.5ex}
\begin{equation}\label{eq:lamda_cons_1}
\sum\limits_{p \in \{0,1\}}\,\lambda_{i,j}^{n,p} = \lambda_{i,j}^{n,0} + \lambda_{i,j}^{n,1} = 1
\end{equation}
\vspace{-1.5ex}
\begin{equation}\label{eq:lamda_cons_2}
x_{i,j}^n\,=\,\sum\limits_{p \in \{0,1\}}\,p\cdot\lambda_{i,j}^{n,p}=\,\lambda_{i,j}^{n,1}
\vspace{-1ex}
\end{equation}

Jointly considering the linear relaxation on the integer constraints, the linear programming problem is eventually obtained as (\ref{eq:obj-lp}).
\begin{equation}\label{eq:obj-lp}
\minimum\limits_{\textbf{\emph{x},$\boldsymbol{\lambda}$}} \sum\limits_{n \in \mathcal{N}}\sum\limits_{i \in S_n}\sum\limits_{j \in C_i} K_0 \cdot\,\lambda_{i,j}^{n,0} + K_1 \cdot\, \lambda_{i,j}^{n,1}
\end{equation}

subject to,
\vspace{-1.5ex}
\begin{displaymath}
\sum\limits_{i\in S_n}\sum\limits_{j\in C_i}x_{i,j}^n = 1,\;\forall n \in \mathcal{N}
\end{displaymath}
\begin{displaymath}
\sum\limits_{n\in \mathcal{N}_i}x_{i,j}^n \le 1 \;\; \forall i \in \mathcal{M},\,\forall j \in C_i
\end{displaymath}
\begin{displaymath}
x_{i,j}^n\,=\,\lambda_{i,j}^{n,1}\, \;\;\forall n\in \mathcal{N},\, i\in S_n,\, j \in C_i
\end{displaymath}
\begin{displaymath}
\lambda_{i,j}^{n,0} + \lambda_{i,j}^{n,1} = 1\, \;\;\forall n\in \mathcal{N},\, i\in S_n,\, j \in C_i
\end{displaymath}
\begin{displaymath}
x_{i,j}^n,\,\lambda_{i,j}^{n,0},\,\lambda_{i,j}^{n,1}\,\in \mathbb{R}^+\, \;\;\forall n\in \mathcal{N},\, i\in S_n,\, j \in C_i
\end{displaymath}
\vspace{-2ex}
\begin{displaymath}
K_0=K^{- (a_n + b_n)},\,K_1=K^{- [a_n + b_n \times (1 - \frac{Q_{i,j}}{Q^{(ref)}_n})]}
\vspace{-1.5ex}
\end{displaymath}
\vspace{-2ex}
\begin{theorem}
An optimal service selection scheme derived from the problem (\ref{eq:obj-lp}) coincides with an optimal scheme derived from the problem (\ref{eq:lex_obj}).
\end{theorem}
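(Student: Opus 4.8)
The plan is to prove the theorem by composing the optimality-preserving transformations already assembled in the preceding development, reducing (\ref{eq:lex_obj}) step by step to the linear program (\ref{eq:obj-lp}), and then closing the one genuinely new gap --- that the $\lambda$-technique linearisation is \emph{exact} rather than merely a relaxation. Concretely, I would first record the reduction chain, then argue the exactness of the $\lambda$-transformation, and finally observe that iterating the reduced LP inside the \emph{FASS} loop recovers the lexicographic optimum.

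\textbf{Reduction chain.} Each round of Algorithm~\ref{alg:mmf} asks for an optimal solution of the single-objective subproblem (\ref{eq:single-obj}). Since every feasible $x^n_{i,j}$ is binary and (\ref{eq:customer_constraint}) forces exactly one selected candidate per request, identity (\ref{eq:exeTm_maxForm}) holds, so (\ref{eq:single-obj}) and (\ref{eq:single-obj-substituted}) have the same set of optimal assignments. On the feasible set one also has $\min_{n\in\mathcal{N}}\pi_n=\min_{n\in\mathcal{N},\,i\in S_n,\,j\in C_i}\pi^n_{i,j}$, because the unique chosen pair contributes exactly $\pi_n$ while every unchosen pair contributes $a_n+b_n\ge\pi_n$; hence maximising the smallest component of $\bm{\varpi}$ is precisely the optimisation in (\ref{eq:single-obj-substituted}), so every lexicographic maximiser of $\bm{\varpi}$ is in particular a maximiser of $\min_n\pi_n$, and any optimal $\bm{x}^*$ of (\ref{eq:single-obj-lex}) is optimal for (\ref{eq:single-obj-substituted}). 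Lemma~\ref{lem:separable-obj} then converts $\lexmax\,\bm{\varpi}$ into the scalar minimisation of $\xi(\bm{\varpi})$, which after substituting the payment formula (\ref{eq:pay_func}) is exactly problem (\ref{eq:separable-obj-final}).

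\textbf{Exactness of the $\lambda$-linearisation.} It remains to show that (\ref{eq:separable-obj-final}) over binary $\bm{x}$ and the LP (\ref{eq:obj-lp}) over nonnegative $(\bm{x},\boldsymbol{\lambda})$ admit a common optimal $\bm{x}$ and the same optimal value. One inequality is immediate: any binary-feasible $\bm{x}$ lifts to a feasible point of (\ref{eq:obj-lp}) by setting $\lambda^{n,1}_{i,j}=x^n_{i,j}$ and $\lambda^{n,0}_{i,j}=1-x^n_{i,j}$, and there $\psi^n_{i,j}(x^n_{i,j})=K^{-\pi^n_{i,j}}$, so the LP value is no larger. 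For the reverse, (\ref{eq:obj-lp}) minimises a \emph{linear} objective over a polytope and hence attains its optimum at an extreme point; eliminating $\lambda^{n,0}_{i,j}=1-\lambda^{n,1}_{i,j}$ via (\ref{eq:lamda_cons_1}) and $x^n_{i,j}=\lambda^{n,1}_{i,j}$ via (\ref{eq:lamda_cons_2}), the surviving constraints on $(\lambda^{n,1}_{i,j})$ are exactly (\ref{eq:customer_constraint})--(\ref{eq:provider_constraint}) together with $0\le\lambda^{n,1}_{i,j}\le 1$, whose coefficient matrix is totally unimodular by Lemma~\ref{lem:tu}; therefore every extreme point has $\lambda^{n,1}_{i,j}\in\{0,1\}$, i.e. $\bm{x}$ is integral. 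At such a point $\psi^n_{i,j}=K^{-\pi^n_{i,j}}$ again, so the LP value equals the objective of (\ref{eq:separable-obj-final}) at a genuine binary solution; the two bounds force equality, and the integral extreme point is optimal for both problems. Convexity of $t\mapsto K^{-[a_n+b_n(1-(Q_{i,j}/Q^{(ref)}_n)\,t)]}$ is what makes each $\psi$-term dominate $K^{-\pi^n_{i,j}}$ off the integer grid, so that no fractional $\boldsymbol{\lambda}$ can undercut the integral optimum --- precisely the setting in which Meyer's $\lambda$-technique~\cite{Meyer1976A} is exact.

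Chaining the two parts, the $\bm{x}$ returned by solving (\ref{eq:obj-lp}) is an optimal solution of the current max-min subproblem (\ref{eq:single-obj}). Finally, fixing the bottleneck request $n^*$ and contracting $\bm{\Theta}$ as in lines~5--8 of Algorithm~\ref{alg:mmf} is exactly the progressive-filling characterisation of Definition~\ref{def:mmf}: after $N$ deterministic rounds every component of $\bm{\pi}$ has been maximised in turn while the earlier ones are held at their optimal values, which is the defining property of the lexicographic optimum of (\ref{eq:lex_obj}); hence the scheme computed through (\ref{eq:obj-lp}) coincides with an optimal scheme of (\ref{eq:lex_obj}). I expect the main obstacle to be the exactness step --- in particular, verifying that augmenting the original totally unimodular system with the coupling equalities (\ref{eq:lamda_cons_1})--(\ref{eq:lamda_cons_2}) creates no fractional extreme points, and that it is convexity, not mere feasibility, of the exponential payment terms that forbids a fractional $\boldsymbol{\lambda}$ from beating every integral one. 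A lighter, bookkeeping-level point is to confirm that the per-round equivalence composes, i.e. that freezing one bottleneck request per iteration still terminates after finitely many rounds at a truly lexmax-optimal assignment.
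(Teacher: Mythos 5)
Your proposal is correct and follows essentially the same route as the paper's own proof: total unimodularity (Lemma~\ref{lem:tu}) gives an integral optimum of the LP (\ref{eq:obj-lp}), which is then chained back through (\ref{eq:separable-obj-final}), Lemma~\ref{lem:separable-obj}, and the equivalences among (\ref{eq:single-obj}), (\ref{eq:single-obj-substituted}) and (\ref{eq:single-obj-lex}). You are in fact more complete than the paper, whose three-sentence proof stops at the single-objective subproblem (\ref{eq:single-obj-substituted}) and leaves implicit both the exactness of the $\lambda$-linearisation and the final progressive-filling step from the per-round subproblems up to the lexicographic objective (\ref{eq:lex_obj}), both of which you spell out.
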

\begin{proof}
The coefficient matrix with the property of total unimodularity guarantees the integral optimum of  the LP problem (\ref{eq:obj-lp}), which is also the optimal solution to the problem (\ref{eq:separable-obj-final}). Besides, the optimal scheme obtained from the problem (\ref{eq:separable-obj-final}) also coincides with the one to the problem (\ref{eq:single-obj-substituted}). To sum up, an optimal service selection scheme to the problem (\ref{eq:obj-lp}) is an optimal scheme to the problem (\ref{eq:single-obj-substituted}).
\end{proof}

From now on, the optimal service selection scheme across multiple service requests maximizing the lowest payment can be computed with efficient LP algorithms (e.g. Simplex Algorithm, Interior Point Method, etc.) and solvers (e.g. MOSEK \cite{Andersen2000The}, CPLEX \cite{CPLEX}, etc.).
\vspace{-0.75ex}
\section{Experiments and Results}
\vspace{-0.75ex}
\subsection{Experimental Setup}
\vspace{-0.75ex}
In our experimental setting, we respectively set the amount of service providers and service requests as 9 and 10, which is $\mathcal{M} = 9$ and $\mathcal{N} = 10$. To be more specific, there are totally 9 service providers which maintain their respective candidate services. The candidate service associated with the QoS value originates from WSDream dataset \cite{Zheng2014Investigating}, which measures response time for 5,825 types of real-world web services from disparate locations. Nine
amongst 5,825 types of web services are randomly chosen
as service providers. At the user side, 10 users simultaneously make service requests, each of which corresponds to a broker responsible for regulating the service selection process. The privilege for service selection is restricted to specified service providers.

We carry out experiments through our simulator in C++, invoking IBM CPLEX \cite{CPLEX} to solve our LP problems. The experimental simulation is conducted on Ubuntu 18.04 LTS while the processor is AMD Ryzen7 2700U 3.8GHz 64-bit with the memory size of 8GB.
\vspace{-1.5ex}
\subsection{Experimental Results}
\vspace{-0.75ex}
To extensively investigate the optimality and fairness of our proposed algorithm, we tune the providers' pricing policy (i.e. $a_n+b_n$) to evaluate the payment deviation across multiple requests and overall revenue of service providers, depicted in Fig.\ref{fig:pay_dev_compare} and Fig.\ref{fig:revenue_compare}.
The pricing policy is set as 8 levels from 1 to 8, where 1 implies the lowest pricing level and 8 is the highest. A higher pricing level indicates that the candidate service is more highly priced.

Our proposed algorithm, referred to as \emph{FASS}, is compared with two baselines - \emph{Revenue Maximization} and \emph{Randomized}. The \emph{Revenue Maximization} algorithm refers to the algorithm whose objective is to maximize the overall revenue including all of users' payments, ignoring how much respectively charged to a user individual compared with other users, while the \emph{Randomized} algorithm randomly selects a service from available candidates for execution. The following experimental results correlated to \emph{Randomized} algorithm are averaged over 1,000 runs.

On the one hand, smaller payment deviation amongst individuals  guarantees the fairness of concurrent service selection. Thanks to the notion of max-min fairness introduced, it points out in Fig.\ref{fig:pay_dev_compare} that our \emph{FASS} algorithm is at the minimum payment deviation. The \emph{Randomized} algorithm takes the second place, whereas the \emph{Revenue Maximization} algorithm performs with the maximum payment deviation, much less for fairness guarantee. On the other hand, service providers which attain higher revenue due to offering services gain higher profits. \emph{Revenue Maximization} algorithm optimizes the overall revenue from all service requests, served as the \emph{optimal} baseline in our comparison study. The \emph{Randomized} algorithm acquires the least revenue because of its blind selecting behavior. Our \emph{FASS} algorithm does not top the list, nevertheless, there simply exist tiny gaps away from the baseline of \emph{Revenue Maximization}, which is within an acceptable range. Notwithstanding  a  little  sacrifice  of  revenue gains, our \emph{FASS} algorithm achieves the fairness guarantee across multiple service requests.
\begin{figure}[!t]
    \centering
    \vspace{-0.2cm}  
    \setlength{\abovecaptionskip}{-0cm}   
    \setlength{\belowcaptionskip}{-0cm}   
    \includegraphics[width=2.5in]{./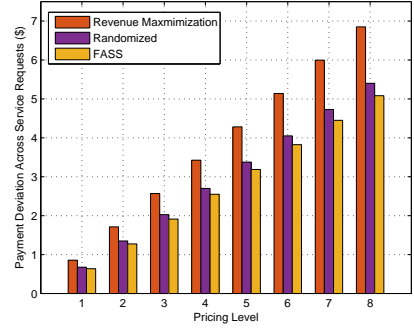}
    \caption{Payment Deviation under Different Algorithms.}
    \label{fig:pay_dev_compare}
\end{figure}

\begin{figure}[!t]
    \centering
    \vspace{-0.4cm}  
    \setlength{\abovecaptionskip}{-0cm}   
    \setlength{\belowcaptionskip}{-2cm}   
    \includegraphics[width=2.5in]{./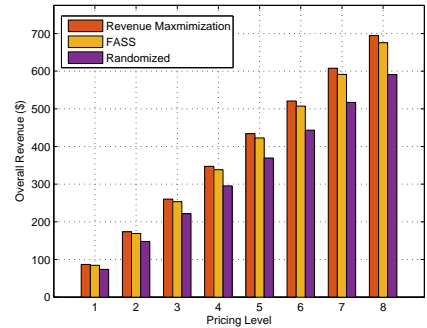}
    \caption{Overall Revenue under Different Algorithms.}
    \label{fig:revenue_compare}
\end{figure}
\begin{figure}[!t]
    \centering
    \vspace{-0.4cm}  
    \setlength{\abovecaptionskip}{-0.2cm}   
    \setlength{\belowcaptionskip}{-2cm}   
    \includegraphics[width=2.5in]{./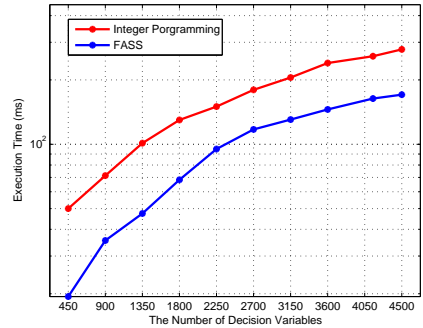}
    \caption{Algorithm Execution Time at Different Scales.}
    \label{fig:exeTm}
    \vspace{-0.6cm}
\end{figure}

Furthermore, we evaluate the practicality of our proposed algorithm by measuring the execution time of various algorithms under different problem scales. Since it is sharply resource intensive and time consuming to solve original lexicographical problem (\ref{eq:lex_obj}), the running times of integer programming (i.e. $x_{i,j}^n\, \in\, \{0,1\}$) and our FASS algorithm is elected to conduct comparative analysis.  The running times of both algorithms are demonstrated in Fig. \ref{fig:exeTm}, with the number of decision variables from 450 to 4,500. Each data point representing the execution time is averaged over 20 runs. Under the growth of problem scale (i.e. number of decision variables), the execution time of both algorithms are kept as nearly linear increase. Compared with integer programming, our FASS algorithm performs much faster over 153\% to 258\%. In the sight of numerical results, the procedure of service selection for FASS can be accomplished below tens or hundreds of
milliseconds, with the minimum of 19.40 ms for 450 variables as well as the maximum of 170.64 ms for 4,500 variables, far less than 1 s. It follows that our FASS algorithm is efficient, acceptable in practice.
\vspace{-1.25ex}
\section{Conclusions and Future Work}
\vspace{-1ex}
Fairness is an important issue in service selection when multiple users share multiple candidate services in a service ecosystem. We study the QoS-aware service selection problem from a globally fairness viewpoint, where service selection constraints are also fully considered. With the objective of achieving max-min fairness across the entire system, we formulate the service selection as a lexicographical maximization problem. An efficient algorithm is designed to solve such problem with acceptably low overhead by introducing $\lambda$-technique and linear relaxation. Our proposed approach are validated by theoretical analysis and experimental results based on real-world dataset.

There are several avenues for future work. On the one hand, dynamic service composition scheme might be designed based on the basic idea proposed in this paper. Since static optimization is already far from trivial to deal with practically due to its inherently multi-objective and discrete nature, there are several hard problems to be addressed especially for the performance issue. On the other hand, another avenue of our future work is to deploy our approach in real-life environments such as cloud computing, web services and mobile service systems. Experimental results obtained from reality should provide us with more insights of the user/system behaviors and algorithm optimization. Also, pricing schemes and gaming among service providers/users in real-world systems are interesting problems for researchers in this community to study.

\section*{Acknowledgment}
This work is supported by the National Key Research and Development Program of China (Nos. 2018YFB1003804 and 2016YFC0303707), the National Natural Science Foundation of China
(No.61772479), the Fundamental Research Funds for the Central Universities (No. 2462018YJRC040), the Natural Science Foundation of Shandong Province of China for Major Basic Research Projects (No. ZR2017ZB0419), and the TaiShan Industrial Experts Program of Shandong Province of China (No. tscy20150305).



%


\vspace{-1.5ex}
\newcommand{\BIBdecl}{\setlength{\itemsep}{0.001 em}}
\bibliographystyle{IEEEtran}
\bibliography{./ICWS2019}

\end{document}